\DeclareMathOperator{\AND}{AND}
\DeclareMathOperator{\OR}{OR}
\DeclareMathOperator{\PAR}{PAR}
\DeclareMathOperator{\size}{size}
\tikzstyle{every picture} = [>=latex]
\title{CNF Encodings of Parity}
\author{Gregory Emdin}{St.~Petersburg State University}{egd3700@mail.ru}{}{}
\author{Alexander~S. Kulikov}{Steklov Mathematical Institute at St.~Petersburg, Russian Academy of Sciences \and St.~Petersburg State University \and \url{https://logic.pdmi.ras.ru/~kulikov/}}{kulikov@logic.pdmi.ras.ru}{https://orcid.org/0000-0002-5656-0336}{}
\author{Ivan Mihajlin}{Steklov Mathematical Institute at St.~Petersburg, Russian Academy of Sciences}{ivmihajlin@gmail.com}{}{}
\author{Nikita Slezkin}{Steklov Mathematical Institute at St.~Petersburg, Russian Academy of Sciences}{ne.slezkin@gmail.com}{https://orcid.org/0000-0003-1904-9261}{}
\authorrunning{G.~Emdin, A.~Kulikov, I.~Mikhajlin, N.~Slezkin} %TODO mandatory. First: Use abbreviated first/middle names. Second (only in severe cases): Use first author plus 'et al.'
\keywords{encoding, parity, lower bounds, circuits, CNF} %TODO mandatory; please add comma-separated list of keywords
\begin{document}
\sloppy
\maketitle

\begin{abstract}
    The minimum number of clauses in a~CNF representation of the parity function
    $x_1 \oplus x_2 \oplus \dotsb \oplus x_n$ is $2^{n-1}$.
    One can obtain a~more compact CNF encoding by~using non-deterministic variables
    (also known as~guess or~auxiliary variables). In~this paper,
    we prove the following lower bounds, that almost match known upper bounds,
    on~the number~$m$ of~clauses and the maximum width~$k$ of clauses:
    1)~if there are at~most $s$~auxiliary variables, then $m \ge \Omega\left(2^{n/(s+1)}/n\right)$ and $k \ge n/(s+1)$;
    2)~the minimum number of clauses is at~least~$3n$.
    We~derive the first two bounds
    from the Satisfiability Coding Lemma due to~Paturi,
    Pudl{\'{a}}k, and Zane
    using a~tight connection
    between CNF encodings and depth-$3$ circuits.
    In~particular, we~show that lower bounds on~the size of~a~CNF encoding
    of~a~Boolean function imply depth-$3$ circuit lower bounds for this function.
\end{abstract}

%\tableofcontents

\section{Overview}

\subsection{Motivation}
A~popular approach for solving a~difficult combinatorial problem in~practice
is to~encode~it in~conjunctive normal form (CNF) and to~invoke a~SAT-solver.
There are two main reasons why this approach works well for many hard problems:
the state-of-the-art SAT-solvers are extremely efficient
and many combinatorial problems are expressed naturally in~CNF.
At~the same time, a~CNF encoding is~not unique and one usually
determines a~good encoding empirically. Moreover, there is
no~such thing as~the best encoding of a~given problem
as~it~also depends on a~SAT-solver at~hand.
Prestwich~\cite{DBLP:series/faia/Prestwich09}
gives an~overview of~various ways to~translate
problems into CNF and discusses their desirable properties,
both from theoretical and practical points of~view.

Already for such simple functions as~the parity function $x_1 \oplus x_2 \oplus \dotsb \oplus x_n$, it is not immediate how to~encode them in CNF
(to~make~it efficiently handled by SAT-solvers).
Parity function is used frequently in cryptography (hash functions, stream ciphers, etc.). It~is known that the minimum number of~clauses in a~CNF computing parity
is~$2^{n-1}$. This becomes impractical quickly as~$n$~grows. A~standard way to~reduce
the size of~an~encoding is by~using non-deterministic variables (also known as~guess
or~auxiliary variables). Namely,
one introduces
$s$~non-deterministic variables $y_1, \dotsc, y_s$
and
partitions
the set of input variables into $s+1$~blocks of size at~most $\lceil n/(s+1) \rceil$: $\{x_1, x_2, \dotsc, x_n\}=X_1 \sqcup X_2 \sqcup \dotsb \sqcup X_{s+1}$. Then, one writes down the following $s+1$ parity functions in~CNF:
\begin{multline}\label{eq:blocks}
    \left(y_1=\bigoplus_{x \in X_1}x\right),
    \left(y_2=y_1 \oplus \bigoplus_{x \in X_2}x\right), \dotsc,\\
    \left(y_s=y_{s-1} \oplus \bigoplus_{x \in X_s}x\right),
    \left(1=y_s \oplus \bigoplus_{x \in X_{s+1}}x\right).
\end{multline}
The value for the parameter~$s$ is~usually determined experimentally.
For example,
Prestwich~\cite{DBLP:journals/dam/Prestwich03}
reports that taking $s=10$ gives the best results when solving
the minimal disagreement parity learning problem using
local search based SAT-solvers.

The simple construction above implies several upper bounds on~the number~$m$ of~clauses,
the number~$s$ of~non-deterministic variables, and the width~$k$ of clauses:
\begin{description}
    \item[Limited non-determinism:] using $s$~non-deterministic variables, one can encode parity either as
    a~CNF with at~most \[m \le (s+1)2^{\lceil n/(s+1) \rceil+2-1} \le 4(s+1)2^{n/(s+1)}\] clauses or as a~$k$-CNF, where \[k=2+{\lceil n/(s+1) \rceil} \le 3+n/(s+1) \, .\]
    \item[Unlimited non-determinism:] one can encode parity as a~CNF with at~most $4n$~clauses (to~do this, use $s=n-1$ non-deterministic variables; then, each of~$n$~functions in~\eqref{eq:blocks} can be~written in~CNF using at~most four clauses).
\end{description}

\subsection{Results}

In~this paper, we~show that
the upper bounds mentioned above are
essentially optimal.
\begin{theorem}\label{thm:main}
    Let $F$~be a~CNF-encoding of $\PAR_n$ with $m$~clauses, $s$~non-deterministic variables, and maximum clause width~$k$.
    \begin{enumerate}
        \item The parameters $s$~and~$m$ cannot be~too small simultaneously:
        \begin{equation}\label{eq:sm}
            m \ge \Omega\left(\frac{s+1}{n} \cdot 2^{n/(s+1)}\right) \, .
        \end{equation}
        \item The parameters $s$~and~$k$ cannot be~too small simultaneously:
        \begin{equation}\label{eq:sw}
            k \ge n/(s+1) \, .
        \end{equation}
        \item The parameter~$m$ cannot be~too small:
        \begin{equation}\label{eq:m}
            m \ge 3n-9 \, .
        \end{equation}
    \end{enumerate}
\end{theorem}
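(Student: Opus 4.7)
My plan treats parts~1 and~2 through the correspondence between a CNF encoding and a depth-$3$ circuit combined with the Satisfiability Coding Lemma of Paturi, Pudl\'ak and Zane, and handles part~3 via a separate combinatorial argument. For the setup of the first two parts, given the encoding $F(x,y)$, define $F_\beta(x) := F(x,\beta)$ for each $\beta \in \{0,1\}^s$. Each $F_\beta$ is a CNF with at most $m$ clauses of width at most $k$, and $\PAR_n(x) = \bigvee_\beta F_\beta(x)$: the encoding realizes a $\Sigma_3$ circuit computing parity. Any satisfying assignment $\alpha$ of any $F_\beta$ lies in $\PAR_n^{-1}(1)$; since flipping one bit of $\alpha$ changes parity, the flipped assignment fails $\PAR_n$ and hence fails $F_\beta$, so some clause of $F_\beta$ is falsified under that flip. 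Therefore $\alpha$ is isolated in $F_\beta$ and possesses a critical clause for every bit.

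For part~2, the coding lemma bounds the number of isolated satisfying assignments of a $k$-CNF on $n$ variables by $2^{n(1-1/k)}$, so summing over the $2^s$ values of $\beta$ gives
\[
    2^{n-1} = |\PAR_n^{-1}(1)| \le 2^s \cdot 2^{n(1-1/k)},
\]
which rearranges to $k \ge n/(s+1)$.

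For part~1, I would use the quantitative form of PPZ: the random-order procedure finds an isolated $\alpha$ of $F_\beta$ with probability at least $2^{-n+\sum_i 1/w_i(\alpha,\beta)}$, where $w_i(\alpha,\beta)$ denotes the width of a critical clause for bit~$i$. Summing the resulting inequality $\sum_\alpha 2^{\sum_i 1/w_i(\alpha,\beta)} \le 2^n$ over $\beta$ and combining with the clause budget --- each clause of $F$ survives in at most $2^s$ restrictions, so the total clause count across all $F_\beta$ is at most $2^s m$ --- a convexity argument balancing the width-based PPZ bound against the total-clause constraint should yield $m \ge \Omega\bigl((s+1)\cdot 2^{n/(s+1)}/n\bigr)$. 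The main obstacle is the correct averaging step that trades off narrow critical clauses against wide ones.

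For part~3 the $\Sigma_3$ view is vacuous once $s$ is unbounded, so I take a direct combinatorial route aiming for an inductive bound $g(n) \ge g(n-1) + 3$ with $g(3) \ge 0$, which iterates to $3n-9$. The plan is to charge three distinct clauses to each input variable $x_i$: two come from monotonicity (since $\exists y\, F = \PAR_n$ is non-monotone in $x_i$, the variable must occur both positively and negatively in $F$), and the third comes from a finer analysis of the restrictions $F|_{x_i=0}$ and $F|_{x_i=1}$, both of which are themselves CNF encodings of $\PAR_{n-1}$ up to negation and hence force an additional clause canonically attached to $x_i$. The main technical hurdle is defining this third clause canonically and verifying that the overall assignment of clause-triples to variables is injective apart from a bounded number of boundary collisions accounting for the $-9$ term.
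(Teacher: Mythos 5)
Your part~2 is correct and is exactly the paper's argument: expand $F$ over the $2^s$ assignments to $y$, note every satisfying assignment of each branch is an isolated satisfying assignment of that $k$-CNF, and compare $2^{n-1}$ against $2^s\cdot 2^{n-n/k}$. No issues there.

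Parts~1 and~3, however, are outlines with genuine gaps rather than proofs. For part~1 you have the right ingredients (PPZ weights, a clause budget, convexity), but the step you yourself flag as ``the main obstacle'' is precisely the step that carries the proof, and your stated clause budget is not the one that works. Summing ``total clauses across all $F_\beta$ is at most $2^s m$'' and playing it against $\sum_\alpha 2^{w(\alpha)}\le 2^n$ per branch loses the game: the decisive observation in the paper is that if $C'$ is a clause of length $l$ obtained from a clause of $F$ by deleting its non-deterministic literals, then $C'$ can serve as the chosen critical clause for at most $l\cdot 2^{n-l}$ pairs $(x,i)$ --- a bound independent of $s$ and of which branch $x$ lands in. Combining this incidence count with the weight bound requires first splitting $\PAR_n^{-1}(1)$ into a heavy part (killed by the PPZ weight lemma, at most $2^s\cdot 2^{n-s-1-\varepsilon}$ points) and a light part where $w'(x)<s+1+\varepsilon$, then applying Jensen to $g(l)=2^l/l$ and a Cauchy--Schwarz (Sedrakyan) step to show the average critical-clause length on light points is at least $n/(s+1+\varepsilon)$, and finally tuning $\varepsilon=1/n$ (which is where the $1/n$ loss in the bound comes from). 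None of this is routine enough to leave as ``should yield.'' For part~3, the charging scheme you propose is not established and does not confront the actual difficulty. Non-monotonicity gives you one positive and one negative occurrence of each $x_i$, but occurrences are not clauses: a single clause may contain several deterministic variables, so injectivity of your clause-triples is exactly the hard case, not a boundary effect absorbed by the $-9$. The paper's proof is an induction in which one assigns $k\in\{1,2\}$ carefully chosen variables and shows at least $3k$ clauses disappear; the only genuinely hard case is when every deterministic literal is a $(2,2)$-literal and some clause contains two deterministic variables, and resolving it requires exploiting the non-deterministic variables (e.g., showing that any $y_j$ occurring in a clause with a $1$-literal $l_i$ is forced to $0$ on all relevant satisfying assignments, or substituting $y_1\mapsto\overline{l_i}$), including subcases where two variables are assigned simultaneously to remove six clauses. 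Your ``third clause canonically attached to $x_i$ from the restrictions'' is not defined, and I do not see how to make it injective; as written, part~3 is a plan, not a proof.
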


\subsection{Techniques}
We~derive a~lower bound $m \ge \Omega((s+1)2^{n/(s+1)}/n)$ from the
Satisfiability Coding Lemma due~to Paturi,
Pudl{\'{a}}k, and Zane~\cite{DBLP:journals/cjtcs/PaturiPZ99}. This lemma allows
to~prove a~$2^{\sqrt n}$ lower bound on~the size of~depth-$3$ circuits
computing the parity function. Interestingly, the lower bound $m \ge \Omega((s+1)2^{n/(s+1)}/n)$
implies a~lower bound $2^{\Omega(\sqrt n)}$ almost immediately, though it~is
not clear whether a~converse implication can be~easily proved.

To~prove a lower bound $m \ge 3n-9$, we~analyze carefully the structure of a~CNF encoding.

\subsection{Related work}
Many results for various computational models with limited non-determinism
are surveyed by~Goldsmith, Levy, and Mundhenk~\cite{DBLP:journals/sigact/GoldsmithLM96}.
An~overview of known approaches for CNF encodings is~given
by~Prestwich~\cite{DBLP:series/faia/Prestwich09}. Two recent results that are close
to~the results of~this paper are the following.
Morizumi~\cite{DBLP:conf/cocoon/Morizumi15} proved that non-deterministic inputs
do~not help in~the model of~Boolean circuits over the $U_2$~basis (the set of~all binary functions except for the binary parity and its complement) for computing the parity function: with and without non-deterministic inputs, the minimum size
of~a~circuit computing parity is $3(n-1)$.
Kucera, Savick{\'{y}}, Vorel~\cite{DBLP:journals/tcs/KuceraSV19} prove almost tight bounds on~the size of~CNF encodings of the at-most-one Boolean function ($[x_1+\dotsb+x_n \le 1]$).
Sinz~\cite{DBLP:conf/cp/Sinz05} proves a~linear lower bound on~the size of~CNF encodings of the at-most-$k$ Boolean function.

\section{General setting}

\subsection{Computing Boolean functions by~CNFs}
For a~Boolean function $f(x_1, \dotsc, x_n) \colon \{0,1\}^n \to \{0,1\}$, we say that a~CNF~$F(x_1, \dotsc, x_n)$ \emph{computes~$f$} if $f \equiv F$, that is, for all $x_1, \dotsc, x_n \in \{0,1\}$, $f(x_1, \dotsc, x_n)=F(x_1, \dotsc, x_n)$.
We~treat a~CNF as a~set of clauses and by~the \emph{size}
of a~CNF we~mean its number of clauses. It~is well known that for every function~$f$, there exists a~CNF computing~it. One way to~construct such a~CNF is the following: for every input $x \in \{0,1\}^n$ such that $f(x)=0$, populate a~CNF with a~clause of length~$n$ that is falsified by~$x$.
%This way, one gets a~CNF with $|f^{-1}(1)|$ clauses.
%We denote the CNF given by~this method by $\SCNF(f)$.
%By $\CNF(f)$ we~denote the minimum size of a~CNF computing~$f$.
%The method described above guarantees that $\CNF(f) \le |f^{-1}(1)|$.

This method does not guarantee that the produced CNF has the minimal number
of~clauses: this would~be too good to be~true as~the problem of~finding a~CNF
of~minimum size for a~given Boolean function
(specified by~its truth table) is NP-complete as~proved by~Masek~\cite{MasekNpComp} (see also \cite{DBLP:journals/siamcomp/AllenderHMPS08} and references herein).
For example, for a~function $f(x_1,x_2)=x_1$ the method
produces a~CNF $(\overline{x_1} \lor x_2) \land (\overline{x_1} \lor \overline{x_2})$ whereas the function $x_1$ is~already in~CNF format.

\subsection{Parity}
It is well known that for many functions, the minimum size of a~CNF is exponential. The canonical example is the parity function $\PAR_n(x_1, \dotsc, x_n)=x_1 \oplus \dotsb \oplus x_n$. The property of $\PAR_n$ that prevents it from being computable
by~short CNF's is~its high \emph{sensitivity}: by~flipping \emph{any}
bit in \emph{any} input~$x \in \{0,1\}^n$, one flips the value
of~$\PAR_n(x)$.

\begin{lemma}\label{lemma:detparity}
    The minimum size of a~CNF computing~$\PAR_n$ has size $2^{n-1}$.
\end{lemma}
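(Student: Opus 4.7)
The plan is to prove matching upper and lower bounds of $2^{n-1}$ on the size of a CNF computing $\PAR_n$.

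For the upper bound, I would invoke the canonical construction mentioned just before the lemma: $\PAR_n$ evaluates to $0$ on exactly $2^{n-1}$ inputs, and for each such input $x$ one adds the width-$n$ clause falsified only by $x$. The conjunction of these $2^{n-1}$ clauses is satisfied by $x$ iff $x$ is not among the zeros of $\PAR_n$, so it computes $\PAR_n$.

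For the lower bound, I would exploit the full sensitivity of $\PAR_n$ emphasized in the preceding paragraph. Let $F=\bigwedge_j C_j$ be any CNF computing $\PAR_n$. The key structural step is to argue that every clause $C_j$ must have width exactly $n$, i.e.\ must mention every variable. Suppose toward contradiction that some $C_j$ omits variable $x_i$. Since $F$ is not identically $1$, there is some assignment $x$ with $F(x)=0$, and in particular some clause $C_{j'}$ (not necessarily $C_j$ itself)—let me tighten this: pick any $x$ that falsifies $C_j$ (which exists since $C_j$ is a non-trivial clause). Then $F(x)=0$, so $\PAR_n(x)=0$. Flip bit $i$ to obtain $x'$; by the sensitivity of parity, $\PAR_n(x')=1$, hence $F(x')=1$. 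But $C_j$ does not depend on $x_i$, so $C_j$ is still falsified by $x'$, forcing $F(x')=0$, a contradiction.

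Given that every clause has width $n$, each clause is falsified by a unique input in $\{0,1\}^n$. For $F$ to evaluate to $0$ on each of the $2^{n-1}$ zeros of $\PAR_n$, there must be a distinct clause falsified by each such zero, yielding $m\ge 2^{n-1}$. Combined with the upper bound this gives equality. The only step that requires a moment of care is the width-$n$ argument; once that is in hand the counting is immediate, so I do not anticipate any real obstacle.
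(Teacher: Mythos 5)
Your proposal is correct and matches the paper's proof essentially verbatim: the upper bound via one width-$n$ clause per falsifying input, and the lower bound by showing every clause must mention all $n$ variables (via the sensitivity/flipping argument) so that each clause rejects exactly one of the $2^{n-1}$ zeros of $\PAR_n$.
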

\begin{proof}
    An~upper bound follows from the method above by~noting that $|\PAR_n^{-1}(0)|=2^{n-1}$.

    A~lower bound is based on~the fact that any clause of a~CNF~$F$ computing $\PAR_n$ must contain all variables $x_1, \dotsc, x_n$. Indeed,
    if a~clause $C \in F$ did not depend on~$x_i$, one could find
    an~input $x \in \{0,1\}^n$ that falsifies~$C$ (hence, $F(x)=\PAR_n(x)=0$) and remains to~be falsifying even after flipping~$x_i$. As~any clause of~$F$ has exactly $n$~variables, it~rejects exactly one $x \in \{0,1\}^n$. Hence, $F$~must contain at~least $|\PAR_n^{-1}(0)|=2^{n-1}$ clauses.
\end{proof}

\subsection{Encoding Boolean functions by~CNFs}
We say that a~CNF~$F$ \emph{encodes} a~Boolean function $f(x_1, \dotsc, x_n)$ if the following two conditions hold.
\begin{enumerate}
    \item In~addition to the input bits $x_1, \dotsc, x_n$, $F$~also depends on $s$~bits $y_1, \dotsc, y_s$ called \emph{guess inputs} or \emph{non-deterministic inputs}.
    \item For every $x \in \{0,1\}^n$, $f(x)=1$ iff there exists $y \in \{0,1\}^s$ such that $F(x,y)=1$. In other words, for every $x \in \{0,1\}^n$,
    \begin{equation}\label{eq:enc}
        f(x) = \bigvee_{y \in \{0,1\}^s}F(x,y) \, .
    \end{equation}
\end{enumerate}
Such representations of Boolean functions are widely used in~practice
when one translates a~problem to~SAT.
For example, the following CNF encodes
$\PAR_4$:
\begin{multline}\label{eq:toyenc}
    (x_1 \lor x_2 \lor \overline{y_1}) \land (x_1 \lor  \overline{x_2} \lor y_1) \land (\overline{x_1} \lor x_2 \lor y_1) \land (\overline{x_1} \lor \overline{x_2} \lor \overline{y_1})
    \land
    (y_1 \lor x_3 \lor \overline{y_2}) \land\\ (y_1 \lor  \overline{x_3} \lor y_2) \land (\overline{y_1} \lor x_3 \lor y_2) \land (\overline{y_1} \lor \overline{x_3} \lor \overline{y_2})
    \land (\overline{x_4} \lor y_2) \land (x_4 \lor \overline{y_2}) \, .
\end{multline}

\subsection{Boolean Circuits and Tseitin Transformation}
A~natural way to~get a~CNF encoding of~a~Boolean function~$f$ is~to take a~circuit
computing~$f$ and apply Tseitin transformation~\cite{zbMATH03325539}.
We~describe this transformation using a~toy example.
The following circuit computes $\PAR_{12}$ with three gates.
It has $12$~inputs, $3$~gates (one of which is an~output gate), and has depth~$3$.

\tikzstyle{gate} = [circle, draw, inner sep=0mm, minimum size=5mm]

\begin{center}
\begin{tikzpicture}[yscale=1]
    %\draw[help lines] (1,0) grid (12,3);
    \begin{scope}[xscale=0.7, yscale=0.7]
    \foreach \n in {1,...,12}
        \node (\n) at (\n,0) {$x_{\n}$};
    \foreach \x/\y/\n/\l/\edges in {2.5/1/y1/y_1/{1,2,3,4}, 6.5/1.5/y2/y_2/{5,6,7,8,y1}, 10.5/2/y_3/y_3/{9,10,11,12,y2}} {
        \node[gate,label=above:$\l$] (\n) at (\x,\y) {$\oplus$};
        \foreach \i in \edges
            \draw[->] (\i) -- (\n);
    }
    \end{scope}
    \node[right, text width=42mm, inner sep=0mm] at (9,.75) {
        $y_1=x_1 \oplus x_2 \oplus x_3 \oplus x_4$\\
        $y_2=y_1 \oplus x_5 \oplus x_6 \oplus x_7 \oplus x_8$\\
        $y_3=y_2 \oplus x_9 \oplus x_{10} \oplus x_{11} \oplus x_{12}$
    };
\end{tikzpicture}
\end{center}
To~the right of~the circuit, we~show
the functions computed by~each gate.
One can translate each line into CNF.
Adding a~clause $(y_3)$ to~the resulting CNF gives a~CNF encoding of the
function computed by~the circuit. In~fact, the CNF~\eqref{eq:blocks}
can be~obtained this way (after propagating the value of~the output gate).

A~CNF can be~viewed as a~depth-$2$ circuit where the output gate is an~AND, all
other gates are~ORs, and the inputs are variables and their negations. For example, the following circuit corresponds to~a~CNF~\eqref{eq:toyenc}. Such depth-2 circuits are also denoted as $\AND \circ \OR$ circuits.

\begin{center}
    \begin{tikzpicture}[xscale=.9, yscale=.5, >=latex]
        %\draw[help lines] (0,0) grid (14,5);

        \foreach \n\s in {1/x_1, 2/x_2, 3/\overline{y_1},
            4/x_1, 5/\overline{x_2}, 6/y_1,
            7/\overline{x_1}, 8/x_2, 9/y_1,
            10/\overline{x_1}, 11/\overline{x_2}, 12/\overline{y_1},
            13/y_1, 14/x_3, 15/\overline{y_2},
            16/y_1, 17/\overline{x_3}, 18/y_2,
            19/\overline{y_1}, 20/x_3, 21/y_2,
            22/\overline{y_1}, 23/\overline{x_3}, 24/\overline{y_2}}
        \node (v_\n) at (0.5 * \n - 0.5, 0) {$\s$};

        \node (v_25) at (12.3, 0) {$y_2$};
        \node (v_26) at (12.7, 0) {$\overline{x_4}$};
        \node (v_27) at (13.7, 0) {$\overline{y_2}$};
        \node (v_28) at (14.3, 0) {$x_4$};

        \foreach \m in {1,...,10}
            \node[gate] (y\m) at (-1 + 1.5 * \m, 2) {$\lor$};

        \foreach \n in {2,...,25} {
            \def\qq{\the\numexpr\n / 3}
            \def\tt{\the\numexpr\qq}
            \def\dec{\the\numexpr\n - 1}
            \draw[->] (v_\dec) -- (y\tt);
        }

        \draw[->] (v_25) -- (y9);
        \draw[->] (v_26) -- (y9);
        \draw[->] (v_27) -- (y10);
        \draw[->] (v_28) -- (y10);

        \node[gate] (z) at (7.25, 4.5) {$\land$};

        \foreach \n in {1,...,10}
            \draw[->] (y\n) -- (z);

    \end{tikzpicture}
\end{center}

\subsection{Depth-$3$ circuits}
Depth-$3$ circuits is a~natural generalization of CNFs:
a~\emph{$\Sigma_3$-circuit} is~simply an~OR of~CNFs.
In a~circuit, these CNFs are allowed to~share clauses.
A~\emph{$\Sigma_3$-formula} is a~$\Sigma_3$-circuit
whose CNFs do~not share clauses (in~other words, it is a~circuit
where the out-degree of every gate is equal to~one).

On the one hand, this computation model is~still simple enough.
On~the other hand,
proving lower bounds against this model is~much harder: getting
a~$2^{\omega(n)}$ lower bound for an~explicit function (say, from $\NP$ or $\E^{\NP}$)
is a~major challenge. Proving a~lower bound $2^{\omega(n/\log \log n)}$
would resolve another open question, through Valiant's depth reduction~\cite{DBLP:conf/mfcs/Valiant77}: proving a~superlinear lower bound
on~the size of logarithmic depth circuits. We~refer the reader to~Jukna's book~\cite[Chapter~11]{DBLP:books/daglib/0028687} for an~exposition
of~known results for depth-$3$ circuits. For the parity function,
the best known lower bound on~depth-$3$ circuits is~$\Omega(2^{\sqrt{n}})$~\cite{DBLP:journals/cjtcs/PaturiPZ99}.
If one additionally requires that a~circuit is a~formula, i.e., that every gate
has out-degree at~most~1, then the best lower bound is~$\Omega(2^{2\sqrt{n}})$~\cite{DBLP:journals/eccc/Hirahara17}.
Both lower bounds are tight up~to~polynomial factors.

Equation~\eqref{eq:enc} shows a~tight connection between CNF encodings and depth-$3$ circuits of~type $\OR \circ \AND \circ \OR$. Namely, let $F(x_1, \dotsc, x_n, y_1, \dotsc, y_s)=\{C_1, \dotsc, C_m\}$ be
a~CNF encoding of a~Boolean function $f \colon \{0,1\}^n \to \{0,1\}$. Then,
$f(x)=\lor_{y \in \{0,1\}^s}F(x,y)$. By~assigning $y$'s in all $2^s$ ways, one gets
an~$\Sigma_3$-formula that computes~$f$:
\begin{equation}\label{eq:expansion}
    f(x)=\bigvee_{j \in [2^s]}F_j(x) \, ,
\end{equation}
where each $F_j$ is a~CNF. We~call this an~\emph{expansion} of~$F$. For example,
an~expansion of a~CNF~\eqref{eq:toyenc} looks as~follows. It is an~OR of~four CNFs.

\begin{center}
    \begin{tikzpicture}[xscale=.85, yscale=.4]
        %\draw[help lines] (0,0) grid (15,7);

        \foreach \n\s\dx\t in {1/x_1/-0.3/0, 2/\overline{x_2}/0.3/0, 3/\overline{x_1}/-0.3/1,
            4/x_2/0.3/1, 5/\overline{x_3}/0/2, 6/\overline{x_4}/0/3,
            7/x_1/-0.3/4, 8/\overline{x_2}/0.3/4, 9/\overline{x_1}/-0.3/5,
            10/x_2/0.3/5, 11/x_3/0/6, 12/x_4/0/7,
            13/x_1/-0.3/8, 14/x_2/0.3/8, 15/\overline{x_1}/-0.3/9,
            16/\overline{x_2}/0.3/9, 17/x_3/0/10, 18/x_4/0/11,
            19/x_1/-0.3/12, 20/x_2/0.3/12, 21/\overline{x_1}/-0.3/13,
            22/\overline{x_2}/0.3/13, 23/\overline{x_3}/0/14, 24/\overline{x_4}/0/15
        }
        \node (v_\n) at (\t+\dx, 0) {$\s$};

        \foreach \m in {1,...,16}
        \node[circle, draw, inner sep=0mm, minimum size=5mm] (y\m) at (-1 + 1 * \m, 2) {$\lor$};

        \foreach \m in {1,...,4}
        \node[circle, draw, inner sep=0mm, minimum size=5mm] (z\m) at (-2.5 + 4 * \m, 4) {$\land$};

        \node[circle, draw, inner sep=0mm, minimum size=5mm] (q) at (7.5, 7) {$\lor$};

        \foreach \n in {1,...,4}
        \draw[->] (z\n) -- (q);

        \foreach \n\m in {1/1, 2/1, 3/1, 4/1, 5/2, 6/2, 7/2, 8/2,
            9/3, 10/3, 11/3, 12/3, 13/4, 14/4, 15/4, 16/4}
        \draw[->]  (y\n) -- (z\m);

        \foreach \n\m in {1/1, 2/1, 3/2, 4/2, 5/3, 6/4,
            7/5, 8/5, 9/6, 10/6, 11/7, 12/8,
            13/9, 14/9, 15/10, 16/10, 17/11, 18/12,
            19/13, 20/13, 21/14, 22/14, 23/15, 24/16}
        \draw[->] (v_\n) -- (y\m);
    \end{tikzpicture}
\end{center}
An~expansion is a~formula: it~is an~OR of~CNFs, every gate has out-degree one.
One can also get a~\emph{circuit-expansion}: in this case, gates are allowed
to~have out-degree more than one; alternatively, CNFs are allowed to~share clauses.
For example, this is a~circuit-expansion of~\eqref{eq:toyenc}.

\begin{center}
    \begin{tikzpicture}[yscale=.4]
        %\draw[help lines] (0,0) grid (11,7);

        \foreach \n\s in {1/x_1, 2/x_2,
            3/x_1, 4/\overline{x_2},
            5/\overline{x_1}, 6/x_2,
            7/\overline{x_1}, 8/\overline{x_2}}
        \node (v_\n) at (0.75 * \n-0.6, 0) {$\s$};
        \foreach \n\s in {9/x_3, 10/\overline{x_3},
            11/x_4, 12/\overline{x_4}}
        \node (v_\n) at (1.5 * \n-7, 0) {$\s$};

        \foreach \m in {1,...,8}
        \node[circle, draw, inner sep=0mm, minimum size=5mm] (y\m) at (-1 + 1.5 * \m, 2) {$\lor$};

        \foreach \m in {1,...,4}
        \node[circle, draw, inner sep=0mm, minimum size=5mm] (z\m) at (2 + 1.5 * \m, 5) {$\land$};

        \node[circle, draw, inner sep=0mm, minimum size=5mm] (q) at (5.75, 7) {$\lor$};

        \foreach \n in {1,...,4}
        \draw[->] (z\n) -- (q);

        \foreach \n\m in {1/3, 1/4, 2/1, 2/2, 3/1, 3/2, 4/3, 4/4,
            5/2, 5/3, 6/1, 6/4, 7/2, 7/3, 8/1, 8/4}
        \draw[->]  (y\n) -- (z\m);

        \foreach \n\m in {1/1, 2/1, 3/2, 4/2, 5/3, 6/3, 7/4, 8/4,
            9/5, 10/6, 11/7, 12/8}
        \draw[->] (v_\n) -- (y\m);

    \end{tikzpicture}
\end{center}

Below, we~show that CNF encodings and depth-3 circuits can be~easily transformed one
into the other. It~will prove convenient to~define the size of a~circuit as~its number of~gates \emph{excluding} the output gate. This way, the size of
a~CNF formula equals its number of~clauses (a~CNF is a~depth-2 formula).
By~a~$\Sigma_3(t,r)$-circuit we~denote a~$\Sigma_3$-circuit
having at~most $t$~ANDs on the second layer and at~most $r$~ORs
on~the third layer (hence, its size is at~most $t+r$).

\begin{lemma}
    Let $F(x_1, \dotsc, x_n, y_1, \dotsc, y_s)$ be a~CNF encoding of~size~$m$
    of a~function
    $f \colon \{0,1\}^n \to \{0,1\}$.
    Then, $f$~can be~computed
    by a~$\Sigma_3(2^s, m \cdot 2^s)$-formula and by a~$\Sigma_3(2^s,m)$-circuit.
\end{lemma}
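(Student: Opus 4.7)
\medskip

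\noindent\textbf{Proof plan.} The plan is to~start from the identity $f(x)=\bigvee_{y\in\{0,1\}^s} F(x,y)$ from~\eqref{eq:enc} and read off the two circuit bounds by~restricting $F$ to~each of~the $2^s$~assignments of~the guess variables. Fix an~assignment $\alpha\in\{0,1\}^s$. For each clause $C_i$ of~$F$, the restriction $C_i|_\alpha$ is~either the constant~$1$ (if~some $y$-literal of~$C_i$ is satisfied by~$\alpha$) or~an~OR of the $x$-literals of~$C_i$. Hence $F(x,\alpha)$ is~a~CNF on~the $x$-variables with at~most $m$~clauses, and $f(x)=\bigvee_{\alpha\in\{0,1\}^s} F(x,\alpha)$.

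For the formula bound, I~would build the $\Sigma_3$-formula layer by~layer: the top OR has $2^s$ AND-children, one for each $\alpha$; the AND corresponding to~$\alpha$ takes as~inputs the (at~most~$m$) non-trivial OR-clauses of $F(x,\alpha)$. Each such OR-gate has out-degree exactly~one, so~the result is~a~formula. The second-layer ANDs number $2^s$ and the third-layer ORs number at~most $m\cdot 2^s$, giving a~$\Sigma_3(2^s,m\cdot 2^s)$-formula.

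For the circuit bound, the key observation is~that the restricted clauses $C_i|_\alpha$ come from a~pool of~size at~most~$m$: each original clause $C_i$ of~$F$ contributes at~most one distinct OR-gate, namely the OR of~its $x$-literals, regardless of~how many $\alpha$'s leave~it non-trivial. So~I~would create one OR-gate per original clause, and for each $\alpha$ wire the AND-gate for~$\alpha$ to~exactly those OR-gates corresponding to~clauses not~already satisfied by~$\alpha$. This gives at~most $m$~ORs on~layer~$3$, $2^s$~ANDs on~layer~$2$, and the top OR, i.e., a~$\Sigma_3(2^s,m)$-circuit.

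I~do~not expect any real obstacle: the statement is essentially a~reformulation of~\eqref{eq:expansion} together with the bookkeeping observation that restricting a~fixed clause by~different $\alpha$'s always yields the same sub-clause on~the $x$-variables (or~the trivial clause). The only care needed is~to~define the ``OR~of~the $x$-literals of~$C_i$'' once and reuse~it across all~$\alpha$ in~the circuit version, which is precisely what~the $\Sigma_3$-circuit model permits but~the $\Sigma_3$-formula model forbids.
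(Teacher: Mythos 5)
Your proposal is correct and follows essentially the same route as the paper: expand $F$ over all $2^s$ assignments to the guess variables, note that each clause either becomes satisfied or reduces to the fixed sub-clause on the $x$-literals, and share those at most $m$ OR-gates across the $2^s$ CNFs to get the circuit bound. No gaps.
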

\begin{proof}
    Let $F=\{C_1, \dotsc, C_m\}$. To~expand~$F$ as~$\bigvee_{j \in [2^s]}F_j$,
    we~go~through all $2^s$ assignments to non-deterministic variables
    $y_1, \dotsc, y_s$. Under any such assignment, each clause $C_i$
    is~either satisfied or~becomes a~clause $C_i' \subseteq C_i$
    resulting from $C_i$ by~removing all its non-deterministic variables.
    Thus, for each $j \in [2^s]$, $F_j \subseteq \{C_1', \dotsc, C_m'\}$.
    The corresponding $\Sigma_3$-formula contains at~most $2^s+m2^s$ gates:
    there are $2^s$ gates for $F_j$'s, each $F_j$ contains no more than $m$~clauses.
    The corresponding $\Sigma_3$-circuit contains no~more than $2^s+m$ gates:
    there are $2^s$ gates for $F_j$'s and $m$~gates for $C_1', \dotsc, C_m'$
    (each $F_j$ selects which of these $m$~clauses to~contain).
\end{proof}

Interestingly, the upper bounds on~depth-3 circuits resulting from this simple transformation cannot be~substantially improved. Indeed, by~plugging~in
a~CNF encoding of~$\PAR_n$ with $s=\sqrt n$ and $m=O(\sqrt n2^{\sqrt n})$
(see~\eqref{eq:blocks}),
one gets a~$\Sigma_3$-formula and a~$\Sigma_3$-circuit of size $2^{2\sqrt n}$
and $2^{\sqrt n}$, respectively, up~to polynomial factors. As~discussed above,
these bounds are known to~be optimal.

Below, we~show a~converse transformation.

\begin{lemma}\label{lemma:circuit2encoding}
    Let $C$~be a~$\Sigma_3(t,r)$-formula (circuit) computing a~Boolean function $f \colon \{0,1\}^n \to \{0,1\}$. Then,
    $f$~can be~encoded as a~CNF with $\lceil \log t \rceil$ non-deterministic variables of size~$r$ ($2rt$, respectively).
\end{lemma}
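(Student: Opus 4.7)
The plan is to use the non-deterministic variables as a selector indexing which of the (at most $t$) CNFs in the top-level OR of the $\Sigma_3$-circuit we are checking. Write $C=F_1 \lor F_2 \lor \dotsb \lor F_t$, where each $F_j$ is a CNF (an AND of clauses), and introduce $s=\lceil \log t \rceil$ guess variables $y_1,\dotsc,y_s$. For each $j \in [t]$, let $y^{(j)} \in \{0,1\}^s$ be the binary encoding of~$j$, and let
\[
D_j = \bigvee_{i=1}^{s} \ell_i^{(j)}, \qquad \text{where } \ell_i^{(j)} = \begin{cases} y_i & \text{if } y_i^{(j)}=0,\\ \overline{y_i} & \text{if } y_i^{(j)}=1. \end{cases}
\]
Thus $D_j(y)=0$ iff $y=y^{(j)}$, and $D_j(y)=1$ otherwise. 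The encoding CNF is
\[
G(x,y) \;=\; \bigwedge_{j=1}^{t}\; \bigwedge_{C \in F_j} (C \vee D_j),
\]
with a minor padding convention when $t$ is not a power of two (duplicate some $F_j$ or add single clauses $D_j$ for the missing indices, which only blows things up by the factor already absorbed in $2rt$).

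Next I~would verify that $G$ indeed encodes~$f$. If $f(x)=1$, pick $j$ with $F_j(x)=1$ and set $y=y^{(j)}$; every clause $(C \vee D_{j'})$ with $j'\ne j$ is satisfied because $D_{j'}(y)=1$, and every clause $(C \vee D_j)$ reduces to the clause $C\in F_j$, which is satisfied by~$x$. Conversely, if $G(x,y)=1$ for some~$y$, let $j$ be the (unique) index with $y=y^{(j)}$; then $D_j(y)=0$, so every clause of $F_j$ must be satisfied by $x$, yielding $F_j(x)=1$ and hence $f(x)=1$.

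The final step is the clause count. In the \emph{formula} case every OR gate on the third layer of $C$ feeds exactly one AND gate, so the multiset $\{C : C \in F_j, j\in[t]\}$ has size equal to the number of ORs, namely at most~$r$; hence $G$ has at most~$r$ clauses (the transformation $C\mapsto C\vee D_j$ only widens each clause, it does not create new ones). In the \emph{circuit} case an OR gate can feed several ANDs, so we instead bound $\sum_j |F_j|$ by $r\cdot t$; padding the index set up to $2^s \le 2t$ gives the stated bound of~$2rt$.

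The construction is essentially forced once one decides to encode the top-level OR via binary selector variables, so I do not expect a serious obstacle. The only delicate point is bookkeeping for $t$ not a power of two: one must ensure that spurious assignments $y\in\{0,1\}^s$ with no corresponding $j\in[t]$ do not accidentally satisfy $G$ for inputs where $f=0$. This is handled either by padding with a duplicate $F_j$ (paying the factor~$2$ in the circuit case) or by adding the single-clause constraint $D_j$ for each unused index (at most $t-1$ extra unit-width clauses), neither of which affects the asymptotic bounds stated in the lemma.
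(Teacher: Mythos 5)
Your proposal is correct and follows essentially the same route as the paper: introduce $\lceil \log t\rceil$ selector variables and append to each clause of $F_j$ the literals that are all false exactly on the assignment indexing $j$, then count clauses separately for the formula case (each clause lies in one $F_j$, giving $r$) and the circuit case (giving $2rt$). Your treatment of the padding when $t$ is not a power of two is a bit more explicit than the paper's, but the construction and bounds are the same.
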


\begin{proof}
    Let $C=F_1 \lor \dotsb \lor F_{t}$ be a~$\Sigma_3$-formula (hence, $r=\size(F_1)+\dotsb+\size(F_{t})$). Introduce $s=\lceil\log t\rceil$ non-deterministic variables $y_1, \dotsc, y_s$. Then, for every assignment to $y_1, \dotsc, y_s$, take the corresponding CNF $F_i$ ($1 \le i \le 2^s$ is the unique integer corresponding to this assignment) and add
    $y_i$'s with the corresponding signs to~every clause of~$F_i$.
    Call the resulting CNF $F_i'$. Then, $F=F_1'\land \dotsb \land F_{2^s}'$ encodes~$f$ and $F$~has at most $r$~clauses.

    If $C$~is a~$\Sigma_3$-circuit, we~need to~create a~separate copy
    of~every gate corresponding to a~clause in each of~$2^s$ CNFs. Hence, the size
    of~the resulting CNF encoding is at~most $r2^s \le 2rt$.
\end{proof}

Finally, we~show that proving strong lower bounds on~the size of CNF encodings is~not easier than proving strong lower bounds on~the size of~depth-3 circuits.
%Namely, we~show that the lower bound~\label{eq:sm} implies strong lower bounds for depth-3 circuits computing $\PAR_n$ in~just few lines.
Let $C$~be a~$\Sigma_3(t,r)$-formula computing~$\PAR_n$.
Lemma~\ref{lemma:circuit2encoding} guarantees that $\PAR_n$ can be~encoded as a~CNF of size~$r$ with $\lceil \log t \rceil$ non-deterministic variables. Then, by~the
inequality~\eqref{eq:sm},
\[\size(C) = t + r \ge t + \Omega \left( \frac{1}{n}\cdot 2^{\frac{n}{\log t + 2}} \right) \ge \frac 1n \left( t + \Omega\left( 2^{\frac{n}{\log t + 2}}\right)\right) \ge \Omega\left(\frac{2^{\sqrt{n}}}{n}\right) \, .\]
Similarly, if $C$~is a~$\Sigma_3(t,r)$-circuit, Lemma~\ref{lemma:circuit2encoding} guarantees that $\PAR_n$ can be~encoded as a~CNF of size~$2rt$ with $\lceil \log t \rceil$ non-deterministic variables. Then,
\[\size(C) = t + r \ge t + \Omega \left( \frac{1}{2tn}\cdot 2^{\frac{n}{\log t + 2}} \right) \ge \Omega\left(\frac{2^{\sqrt{n/2}}}{n}\right) \, .\]
%(the minimum is achieved when $t \approx 2^{\frac{n}{\log t + 2} - \log t }$, so $t \approx 2^{\sqrt{n/2}}$).
%\todo[inline]{Grisha, check that $t \approx 2^{\sqrt{n/2}}$}
%\todo[inline]{Grisha, what needs to be done here? Grisha: Can we call about this?}

\section{Lower bounds for CNF encodings of parity}
In~this section, we~prove Theorem~\ref{thm:main}. The essential
property of the parity function used in~the proof
is~its high sensitivity (every satisfying assignment is isolated): for any $i \in [n]$
and any $x,x' \in \{0,1\}^n$ that differ in the $i$-th position only,
$\PAR(x) \neq \PAR(x')$. This means that if a~CNF~$F$
computes $\PAR$ and $F(x)=1$, then $F$~must contain
a~clause that is satisfied by~$x_i$ only.
Following~\cite{DBLP:journals/cjtcs/PaturiPZ99},
we~call such a~clause \emph{critical} with respect to~$(x,i)$.
This notion extends to~CNF encodings in a~natural way.
Namely, let $F(x, y)$ be a~CNF encoding of~$\PAR$.
Then, for any $(x,y)$ such that $F(x,y)=1$ and any $i \in [n]$,
$F$~contains a~clause that becomes falsified if~one flips the bit~$x_i$.
We~call~it critical w.r.t. $(x,y,i)$.

\subsection{Limited non-determinism}
To~prove a~lower bound $m \ge \Omega((s+1)2^{n/(s+1)}/n)$, we adapt a~proof of the $\Omega(n^{1/4}2^{\sqrt n})$ lower bound for depth-3 circuits computing $\PAR_n$ by Paturi, Pudl\'{a}k, and Zane~\cite{DBLP:journals/cjtcs/PaturiPZ99}. Let $F(x_1, \dotsc, x_n)$ be a~CNF.
For every isolated satisfying assignment~$x \in \{0,1\}^n$ of~$F$ and every~$i \in [n]$, fix a~shortest critical clause w.r.t. $(x,i)$ and denote~it by~$C_{F,x,i}$.
Then, for an~isolated satisfying assignment~$x$, define its weight w.r.t.~$F$~as
\[w_F(x) = \sum\limits_{i=1}^n \frac{1}{|C_{F,x,i}|} \, .\]

\begin{lemma}[Lemma~5 in~\cite{DBLP:journals/cjtcs/PaturiPZ99}]\label{lemma:isolatedweight}
    For any~$\mu$, $F$~has at~most $2^{n - \mu}$ isolated satisfying assignments
    of~weight at~least~$\mu$.
\end{lemma}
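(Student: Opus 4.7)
The plan is to establish the claim via the satisfiability coding argument of Paturi, Pudl{\'{a}}k, and Zane. Fix a~permutation~$\pi$ of~$[n]$ and an isolated satisfying assignment~$x$ of~$F$, and produce an encoding $\mathrm{enc}_\pi(x)\in\{0,1\}^*$ by scanning the variables in the order $x_{\pi(1)},\dotsc,x_{\pi(n)}$. When processing~$x_{\pi(j)}$, check whether $F$~contains a~clause all of whose literals other than the one on~$x_{\pi(j)}$ are already falsified by the previously-recovered bits $x_{\pi(1)},\dotsc,x_{\pi(j-1)}$: if so, the true value of~$x_{\pi(j)}$ is forced and nothing is appended, otherwise we append the single bit~$x_{\pi(j)}$.

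Next I would verify that, for each fixed~$\pi$, the map $x \mapsto \mathrm{enc}_\pi(x)$ is a~prefix-free code: a~decoder that knows $F$ and~$\pi$ simulates the encoder bit by bit and therefore self-terminates as~soon as~all $n$~variables have been processed, which rules out one encoding being a~proper prefix of~another. Kraft's inequality then gives
\[
\sum_{x}2^{-|\mathrm{enc}_\pi(x)|} \le 1,
\]
where the sum ranges over all isolated satisfying assignments of~$F$.

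The heart of the argument is a~lower bound on the expected number of forced bits. At~step~$j$, the bit~$x_{\pi(j)}$ is forced whenever every variable of the designated critical clause~$C_{F,x,\pi(j)}$ other than~$x_{\pi(j)}$ has already been processed, because that clause is satisfied in~$x$ solely by its literal on~$x_{\pi(j)}$. Over a~uniformly random~$\pi$, the probability that $x_{\pi(j)}$ is the last of the $|C_{F,x,\pi(j)}|$ variables of this clause to~appear in~$\pi$ equals~$1/|C_{F,x,\pi(j)}|$, so by~linearity
\[
E_\pi\bigl[n-|\mathrm{enc}_\pi(x)|\bigr] \ge \sum_{i=1}^n \frac{1}{|C_{F,x,i}|} = w_F(x) \ge \mu.
\]

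Finally I would combine the two inequalities via Jensen's inequality applied to the convex function $t\mapsto 2^{-t}$: for every isolated satisfying assignment~$x$ of weight at~least~$\mu$,
\[
E_\pi\bigl[2^{-|\mathrm{enc}_\pi(x)|}\bigr] \ge 2^{-E_\pi[|\mathrm{enc}_\pi(x)|]} \ge 2^{-(n-\mu)}.
\]
Summing over the set~$S$ of isolated satisfying assignments of weight at~least~$\mu$ and swapping expectation with summation,
\[
|S|\cdot 2^{-(n-\mu)} \le E_\pi\!\left[\sum_{x\in S}2^{-|\mathrm{enc}_\pi(x)|}\right] \le 1,
\]
so~$|S|\le 2^{n-\mu}$, as~required. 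The main subtlety to~get right is the prefix-freeness claim: it relies on~the decoder's read/skip decisions depending only on~the current partial assignment and on~$F,\pi$, so~that two encodings agreeing on~a~prefix push the decoder along the same trajectory; the rest is routine probabilistic bookkeeping.
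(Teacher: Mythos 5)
Your proof is correct: it is precisely the satisfiability-coding argument of Paturi, Pudl\'{a}k, and Zane, which is exactly the source the paper cites for this lemma (the paper itself states it without proof). All the key points --- prefix-freeness of the encoding, the $1/|C_{F,x,i}|$ forcing probability coming from $x_i$ being last among the variables of its critical clause, Jensen's inequality, and Kraft's inequality --- are handled correctly.
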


\begin{proof}[Proof of~\eqref{eq:sm}, $m \ge \Omega\left(\frac{s+1}{n} \cdot 2^{n/(s+1)}\right)$]
    Let $F(x_1, \dotsc, x_n, y_1, \dotsc, y_s)$ be a~CNF encoding of size~$m$ of $\PAR_n$. Consider its expansion:
    \[\PAR_n(x)=\bigvee_{j \in [2^s]} F_j(x) \, .\]

    We extend the definitions of $C_{F,x,i}$ and $w(x)$ to CNFs with non-deterministic variables as~follows. Let $x \in \PAR^{-1}_n(1)$
    and let $j \in [2^s]$ be the smallest index such that $F_j(x)=1$.
    For $i \in [n]$, let $C'_{F,x,i}=C_{F_j,x,i}$ (that is, we~simply take the
    first $F_j$ that is~satisfied by~$x$ and take its critical clause w.r.t. $(x,i)$).
    Then, the weight $w'_F(x)$ of~$x$ w.r.t. to~$F$ is~defined simply as~$w_{F_j}(x)$.
    Clearly,
    \[w'_F(x) = \sum_{i \in [n]} \frac{1}{|C'_{(F,x,i)}|} \,. \]
    For $l \in [n]$, let also $N_{l,F}(x)=|\{i \in [n] \colon |C'_{F,x,i}|=l\}|$
    be~the number of~critical clauses (w.r.t.~$x$) of~length~$l$. Clearly,
    \begin{equation}\label{eq:weight}
        w'_F(x)=\sum_{l \in [n]}\frac{N_{l,F}(x)}{l} \, .
    \end{equation}

    For a~parameter $0<\varepsilon<1$ to be~chosen later, split $\PAR_n^{-1}(1)$
    into light and heavy parts:
    \begin{align*}
        H &= \{x \in \PAR_n^{-1}(1)  \colon w'_F(x) \ge s + 1 + \varepsilon\} \, ,\\
        L &= \{x \in \PAR_n^{-1}(1)  \colon w'_F(x) < s + 1 + \varepsilon\} \, .
    \end{align*}

    We~claim that
    \[|H| \le 2^s \cdot 2^{n-s-1-\varepsilon} \, .\]
    Indeed, for every $x \in H$, $w'_F(x)=w_{F_j}(x)$ for some $j \in [2^s]$,
    and by~Lemma~\ref{lemma:isolatedweight}, $F_j$~cannot accept more than
    $2^{n-s-1-\varepsilon}$ isolated solutions of weight at~least
    $s+1+\varepsilon$. Since $|H|+|L|=|\PAR^{-1}_n(1)|=2^{n-1}$, we~conclude that
    \begin{equation}\label{eq:lightsize}
        |L|=2^{n-1} -|H| \ge (1 - 2^{-\varepsilon})2^{n-1} \, .
    \end{equation}

    Let $F=\{C_1, \dotsc, C_m\}$. For every $k \in [m]$, let $C'_k \subseteq C_k$
    be the clause~$C_k$ with all non-deterministic variables removed. Hence, for every
    $j \in [2^s]$, $F_j \subseteq \{C_1', \dotsc, C_m'\}$. For $l \in [n]$,
    let $m_l=|\{k \in [m] \colon |C'_k|=l\}|$ be~the number of~such clauses of length~$l$.
    Consider a~clause~$C'_k$
    and let $l=|C'_k|$. Then, there are at~most $l2^{n-l}$ pairs $(x,i)$,
    where $x \in \PAR^{-1}(1)$ and $i \in [n]$, such that $C'_{F,x,i}=C_k'$:
    there are at~most~$l$ choices for~$i$, fixing~$i$ fixes the values
    of~all $l$~literals in~$C_k'$ (all of them are equal to~zero except for
    the $i$-th one), and there are no~more than $2^{n-l}$ choices for the other bits of~$x$.
    Recall that $N_{l,F}(x)$ is the number of~critical clauses w.r.t.~$x$
    of~length~$l$. Thus, we~arrive at~the following inequality:
    \[m_l \cdot l \cdot 2^{n - l} \ge \sum_{x \in \PAR^{-1}(1)} N_{F,l}(x) \ge \sum_{x \in L} N_{F,l}(x) \, .\]
    Then,
    \begin{equation}\label{eq:cc}
        m =\sum_{l \in [n]}m_l \ge \sum_{l \in [n]}\frac{\sum_{x \in L} N_{F,l}(x)}{l2^{n-l}}=
        \sum_{x \in L} \sum_{l\in [n]} \frac{N_{F,l}(x)}{l2^{n-l}} =
        \sum_{x \in L} n2^{-n} \sum_{l \in [n]} \frac{N_{F,l}(x)}{n} \cdot \frac{2^l}{l} \, .
    \end{equation}

    To~estimate the last sum, let
    \[T(x)=\sum_{l \in [n]} \frac{N_{F,l}(x)}{n} \cdot \frac{2^l}{l}=\sum_{l \in [n]} \frac{N_{F,l}(x)}{n}\cdot g(l)\,,\]
    where $g(l) = \frac{2^l}{l}$. Since $g(l)$ is~convex (for $l>0$) and $\sum_{l \in [n]}\frac{N_{F,l}(x)}{n}=1$, Jensen's inequality gives
    \begin{equation}\label{eq:aa}
        T(x) \ge g\left(\sum_{l \in [n]} \frac{N_{F,l}(x)}{n}\cdot l\right) \, .
    \end{equation}
    Further, Sedrakyan's inequality\footnote{Sedrakyan's inequality is a~special case of Cauchy--Schwarz inequality: for all $a_1, \dotsc, a_n \in \mathbb R$ and $b_1, \dotsc, b_n \in \mathbb{R}_{>0}$, $\sum_{i=1}^n a_i^2/b_i \ge \left(\sum_{i=1}^n a_i\right)^2/\sum_{i=1}^n b_i$.} (combined with \eqref{eq:weight} and $\sum_{l \in [n]}N_{F,l}(x)=n$) gives
    \begin{equation}\label{eq:bb}
        \sum_{l \in [n]} l N_{F,l}(x) = \sum_{l \in [n]} \frac{N_{F,l}^2(x)}{N_{F,l}(x) / l} \ge \frac{(\sum_{l \in [n]} N_{F,l}(x))^2}{\sum_{l \in [n]} N_{F,l}(x) / l} = \frac{n^2}{w'_F(x)} \, .
    \end{equation}
    Since $g(l)$ is monotonically increasing for $l \ge 1/\ln 2$ and $w'_F(x)<s+1+\varepsilon$ for every $x \in L$,
    combining \eqref{eq:aa}~and~\eqref{eq:bb}, we~get
    \begin{equation}\label{eq:tx}
        T(x) \ge g \left( \frac{n}{w'_F(x)}\right) \ge g \left( \frac{n}{s+1+\varepsilon}\right) \, ,
    \end{equation}
    for $s \le n\ln 2 - 1 -\varepsilon$.
    (If $s > n\ln 2 - 1 -\varepsilon$, then the lower bound $m \ge \Omega(2^{n/(s+1)}/n)$ is trivial.)

    %Plugging this into~\eqref{eq:cc} and combining with~\eqref{eq:lightsize}, gives
    Thus,
    \begin{align*}
        m &\ge \sum_{x \in L}n2^{-n}T(x) \ge \tag{\ref{eq:cc}~and~\ref{eq:tx}}\\
          & \ge \sum_{x \in L}n2^{-n}g \left( \frac{n}{s+1+\varepsilon}\right)=\tag{definition of~$g$}\\
          &=|L|2^{-n}2^{\frac{n}{s+1+\varepsilon}}(s+1+\varepsilon) \ge\tag{\ref{eq:lightsize}}\\
          &\ge \left(\frac{1}{2} - \frac{1}{2^{\varepsilon + 1}}\right)(s + 1 + \varepsilon) 2^\frac{n}{s + 1 + \varepsilon}=\tag{rewriting}\\
          &=\left(\frac{1}{2} - \frac{1}{2^{\varepsilon + 1}}\right)(s + 1 + \varepsilon) 2^\frac{n}{s + 1} 2^{\frac{-n\varepsilon}{(s+1)(s+1+\varepsilon)}} \,.
    \end{align*}

    Set $\varepsilon=1/n$. Then,
	\[\left(\frac{1}{2} - \frac{1}{2^{\frac{1}{n} + 1}}\right) = \Theta \left( \frac{1}{n} \right)\,.\]
    Also,
    \[\frac{1}{2} \le 2^{\frac{-1}{(s+1)(s+1+1/n)}} \le 1\,,\]
    as $2^{-1/x}$ is~increasing for $x>0$.
	This finally gives a~lower bound
	\[m \ge \Omega \left(\frac{s+1}{n} \cdot 2^{\frac{n}{s+1}} \right)\, .\]
\end{proof}

\subsection{Width of clauses}
To~prove the lower bound $k \ge n/(s+1)$, we~use the following corollary of~the Satisfiability Coding Lemma.

\begin{lemma}[Lemma~2 in~\cite{DBLP:journals/cjtcs/PaturiPZ99}]\label{lemma:isolated}
    Any $k$-CNF $F(x_1, \dotsc, x_n)$ has at~most $2^{n - n/k}$ isolated satisfying assignments.
\end{lemma}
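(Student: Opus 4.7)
\textbf{The plan} is to deploy the classical coding argument behind the Satisfiability Coding Lemma: compress each isolated satisfying assignment into a~bit string whose expected length (over a~uniformly random ordering of~variables) is noticeably below~$n$, and then combine Kraft's inequality with Jensen's inequality to count such assignments.

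\textbf{Encoding.} Let $I$~denote the set of~isolated satisfying assignments of~$F$. For a~permutation~$\pi$ of~$[n]$ and $x \in I$, process the variables in~the order $\pi(1), \dotsc, \pi(n)$. At~step~$j$, let $v=\pi(j)$ and, after substituting the already decoded values $x_{\pi(1)}, \dotsc, x_{\pi(j-1)}$, check whether some clause $C \in F$ has been reduced to a~unit clause forcing~$v$. If~so, $v$~is \emph{implied} and we~append nothing; otherwise, we~append the bit~$x_v$. Denote the resulting string by~$e_\pi(x)$ and its length by~$\ell_\pi(x)$. Note that whenever a~clause becomes unit, its remaining literal must agree with~$x_v$, since $x$~satisfies the clause; so the implied value always equals~$x_v$, and the encoder and decoder stay synchronized.

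\textbf{Two key facts.} (i)~Given $\pi$~and $e_\pi(x)$, the decoder runs the same loop, consuming the next bit precisely at~steps where no clause is~currently unit. Because this decision depends only on~bits already read, $\{e_\pi(x) \colon x \in I\}$ is~a~prefix code, so~for every~$\pi$,
\[\sum_{x \in I} 2^{-\ell_\pi(x)} \le 1\]
by~Kraft's inequality. (ii)~Fix $x \in I$ and $v \in [n]$. Since $x$~is isolated, $F$~contains a~critical clause $C_{x,v}$ for~$(x,v)$ of~width at~most~$k$. Under uniform random~$\pi$, with probability $1/|C_{x,v}| \ge 1/k$ the variable~$v$ appears last among the variables of~$C_{x,v}$; in~that case $C_{x,v}$ becomes a~unit clause at~step~$\pi^{-1}(v)$, so~$v$~is implied. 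By~linearity, the expected number of~implied variables is at~least~$n/k$, whence $\mathbf{E}_\pi[\ell_\pi(x)] \le n - n/k$.

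\textbf{Finishing.} Applying Jensen's inequality to the convex function $t \mapsto 2^{-t}$ gives $\mathbf{E}_\pi[2^{-\ell_\pi(x)}] \ge 2^{-\mathbf{E}_\pi[\ell_\pi(x)]} \ge 2^{-(n-n/k)}$ for every $x \in I$. Averaging~(i) over~$\pi$ and interchanging sums,
\[|I| \cdot 2^{-(n-n/k)} \le \sum_{x \in I} \mathbf{E}_\pi\bigl[2^{-\ell_\pi(x)}\bigr] = \mathbf{E}_\pi\Bigl[\sum_{x \in I} 2^{-\ell_\pi(x)}\Bigr] \le 1,\]
which rearranges to $|I| \le 2^{n - n/k}$. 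The~main obstacle is the prefix-code property: one must describe the decoding procedure carefully enough that, at~each step, the decoder can~decide without consulting later bits whether to~consume the next bit. Once this is in~place, the combination of~a~pointwise Kraft bound with a~Jensen bound on expected length is routine.
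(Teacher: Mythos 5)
Your proof is correct: the paper does not prove this lemma but imports it directly from Paturi, Pudl\'ak, and Zane, and your argument is a faithful reconstruction of their Satisfiability Coding Lemma proof (random variable ordering, skipping implied bits via critical clauses, Kraft plus Jensen). No gaps; the prefix-code and expected-length steps are exactly as in the original.
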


\begin{proof}[Proof of~\eqref{eq:sw}, $k \ge n/(s+1)$]
	Consider a~$k$-CNF $F(x_1, \dotsc, x_n, y_1, \dotsc, y_s)$ that encodes $\PAR_n$. Expand~$F$ to an~OR of $2^s$ $k$-CNFs:
	\[\PAR_n(x)=\bigvee_{j \in [2^s]} F_j(x) \, .\]
	By Lemma~\ref{lemma:isolated}, each $F_j$ accepts at~most $2^{n - n/k}$ isolated solutions. Hence,
	\[2^s \ge \frac{2^{n - 1}}{2^{n - n/k}} = 2^{n/k - 1}\]
    and thus, $k \ge n/(s+1)$.
\end{proof}

\subsection{Unlimited non-determinism}
In this section, we prove the lower bound $m \ge 3n-9$.

\begin{proof}[Proof of~\eqref{eq:m}, $m \ge 3n-9$]
    We~use induction on~$n$.
    The base case $n \le 3$ is clear.
    To~prove the induction step, assume that $n>3$ and consider a~CNF encoding~$F(x_1, \dotsc, x_n, y_1, \dotsc, y_s)$ of $\PAR_n$ with the minimum number of~clauses. Below, we~show
    that one can find $k$~deterministic variables (where $k=1$ or~$k=2$) such that assigning appropriately chosen constants to~them reduces the number of~clauses
    by~at~least $3k$, respectively. The resulting function computes
    $\PAR_{n-k}$ or its negation. It~is not difficult to~see
    that the minimum number of~clauses in~encodings of~$\PAR$
    and its negation are equal (by~flipping the signs of~all
    occurrences of any deterministic variable in a~CNF encoding of~$\PAR$,
    one gets a~CNF encoding of the negation of~$\PAR$, and vice versa). Hence,
    one can proceed by~induction and conclude that $F$~contains at~least
    $3(n-k)-9+3k=3n-9$ clauses.

    To~find the required $k$~deterministic variables,
    we~go~through a~number of~cases. In the analysis below,
    by a~$d$-literal we~mean a~literal that appears exactly $d$~times
    in~$F$, a~$d^+$-literal appears at~least $d$~times. A~$(d_1,d_2)$-literal
    occurs $d_1$~times positively and $d_2$~times negatively. Other types
    of~literals are defined similarly. We~treat a~clause as~a~set of~literals
    (that do~not contain a~literal together with its negation) and a~CNF
    formula as a~set of~clauses.

    Note that for all $i \in [s]$, $y_i$ must be a~$(2^+,2^+)$-literal. Indeed,
    if $y_i$ (or $\overline{y_i}$) is a~$0$-literal, one can assign $y_i \gets 0$
    ($y_1 \gets 1$, respectively). It~is not difficult to~see that the resulting formula still encodes $\PAR$. If $y_i$ is a~$(1,t)$-literal, one can
    eliminate~it using resolution: for all pairs of clauses $C_0, C_1 \in F$ such that
    $\overline{y_i} \in C_0$ and $y_i \in C_1$, add a~clause $C_0 \cup C_1 \setminus \{y_i, \overline{y_i}\}$ (if~this clause contains a~pair of~complementary
    literals, ignore~it); then, remove all clauses containing $y_i$ or~$\overline{y_i}$. The resulting formula still encodes $\PAR_n$,
    but has a~smaller number of~clauses than~$F$ (we~remove $1+t$ clauses and add at~most $t$~clauses).

    In~the case analysis below,
    by~$l_i$ we~denote a~literal that corresponds
    to~a~deterministic variable~$x_i$ or~its negation~$\overline{x_i}$.
    \begin{enumerate}
        \item \emph{$F$~contains a~$3^+$-literal~$l_i$.} Assigning $l_i \gets 1$ eliminates at~least three clauses from~$F$.

        \item \emph{$F$~contains a~$1$-literal~$l_i$.} Let $l_i \in C \in F$ be a~clause containing~$l_i$. $C$~cannot contain other deterministic variables: if $l_i, l_j \in C$ (for $i \neq j \in [n]$), consider $x \in \{0,1\}^n$ such that
        $\PAR_n(x)=1$ and $l_i=l_j=1$ (such~$x$ exists since $n > 3$), and its extension $y \in \{0,1\}^s$ such that $F(x,y)=1$; then, $F$~does not contain
        a~critical clause w.r.t. $(x,y,i)$. Clearly, $C$~cannot be a~unit clause, hence it~must contain a~non-deterministic variable~$y_j$. Consider $x \in \{0,1\}^n$, such that $\PAR_n(x)=1$ and $l_i=1$, and its extension $y \in \{0,1\}^s$ such that $F(x,y)=1$. If $y_j=1$, then $F$~does not contain
        a~critical clause w.r.t. $(x,y,i)$. Thus, for every $(x,y) \in \{0,1\}^{n+s}$ such that $F(x,y)=1$ and $l_i=1$, it holds that $y_j=0$. This observation allows~us to~proceed as~follows: first assign $l_i \gets 1$, then assign $y_j \gets 0$. The former assignment satisfies the clause~$C$, the latter one satisfies all the clauses containing $\overline{y_j}$. Thus, at~least three clauses are removed.

        \item\label{case:three} \emph{For all $i \in [n]$, $x_i$ is a~$(2,2)$-literal.} If there~is
        no~clause in~$F$ containing at~least two deterministic variables, then
        $F$~contains at~least $4n$ clauses and there~is nothing to~prove.
        Let $l_i,l_j \in C_1 \in F$, where $i \neq j$, be a~clause
        containing two deterministic variables and let $l_i \in C_2 \in F$ and $l_j \in C_3 \in F$ be the two clauses containing other occurrences of~$l_i$ and~$l_j$ ($C_1 \neq C_2$ and $C_1 \neq C_3$, but it can be the case that $C_2=C_3$).

        Assume that $C_2$ contains another deterministic variable: $l_k \in C_2$, where $k \neq i,j$.
        Consider $x \in \{0,1\}^n$, such that $\PAR_n(x)=1$ and $l_i=l_j=l_k=1$
        (such $x$~exists since $n>3$), and its extension $y \in \{0,1\}^s$ such that $F(x,y)=1$. Then, $F$~does not contain a~critical clause w.r.t. $(x,y,i)$:
        $C_1$ is~satisfied by~$l_j$, $C_2$ is~satisfied by~$l_k$. For the same reason, $C_2$ cannot contain the literal~$l_j$. Similarly, $C_3$ cannot contain other deterministic variables and the literal~$l_i$.
        (At the same time, it is not excluded that $\overline{l_j} \in C_2$ or
        $\overline{l_i} \in C_3$.)
        Hence, $C_2 \neq C_3$. Note that each
        of~$C_2$ and~$C_3$ must contain at~least one non-deterministic variable:
        otherwise, it~would be~possible to~falsify~$F$ by~assigning
        $l_i$~and~$l_j$.

        \begin{enumerate}
            \item \emph{At~least one of~$C_2$ and~$C_3$ contains a~single non-deterministic variable.} Assume that it is~$C_2$:
            \[\{l_i,y_1\} \subseteq C_2 \subseteq \{l_i, \overline{l_j}, y_1\}\,. \]
            Assign $l_j \gets 1$. This eliminates two clauses: $C_1$ and $C_3$ are satisfied. Also, under this substitution, $C_2=\{l_i,y_1\}$ and $l_i$
            is a~$1$-literal. We claim that in any satisfying assignment of the resulting formula~$F'$, $l_i=\overline{y_1}$. Indeed, if $(x,y)$
            satisfies $F'$ and $l_i=y_1$, then $l_i=y_1=1$ (otherwise $C_2$
            is~falsified). But then there is no critical clause in~$F'$ w.r.t. $(x,y,i)$. Since in every satisfying assignment $l_i=\overline{y_1}$,
            we~can replace every occurrence of~$y_1$ ($\overline{y_1}$)
            by~$\overline{l_i}$ ($y_1$, respectively). This, in~particular, satisfies
            the clause~$C_2$.

            \item \emph{Both $C_2$ and $C_3$ contain at~least two non-deterministic variables:}
            \[
            \{l_i,\ l_j\} \subseteq C_1, \quad
            \{l_i,\ y_1,\ y_2\} \subseteq C_2, \quad
            \{l_j,\ y_3,\ y_4\} \subseteq C_3 \, .
            \]
            Here, $y_1$~and~$y_2$ are different variables, $y_3$~and~$y_4$ are also different, though it~is not excluded that some of $y_1$~and~$y_2$
            coincide with some of $y_3$~and~$y_4$. Let $Y \subseteq \{y_1, \dotsc, y_s\}$ be~non-deterministic variables appearing in~$C_2$ or~$C_3$.

            Recall that for every $(x,y) \in \{0,1\}^{n+s}$ such that $F(x,y)=1$ and $l_i=l_j = 1$, it holds that $y=0$ for all $y \in Y$. This means that
            if a~variable $y \in Y$ appears in both~$C_2$ and~$C_3$, then it has the same sign in~both clauses. Consider two subcases.

            \begin{enumerate}
                \item $Y=\{y_1,y_2\}$: \[\{l_i,\ l_j\} \subseteq C_1,\quad  \{l_i,\ y_1,\ y_2\} \subseteq C_2,\quad  \{l_j,\ y_1,\ y_2\} \subseteq C_3 \, .\]

                Assume that $\overline{y_1} \not \in C_1$. Assign $l_i \gets 1$, $l_j \gets 1$. Then, assigning $y_1 \gets 0$ eliminates at~least two clauses. Let~us show that there remains
                a~clause that contains $\overline{y_2}$. Consider $x \in \PAR_n^{-1} (1)$, such that $l_i = l_j = 1$, and its extension $y \in \{0, 1\}^s$, such $F(x, y) = 1$. We~know that $y_1$ and $y_2$ must
                be~equal to~$0$. However, flipping the value of~$y_2$ results
                in~a~satisfying assignment.
                Thus, it remains to analyze the following case: \[\{l_i,\ l_j, \overline{y_1}, \overline{y_2} \} \subseteq C_1, \quad               \{l_i,\ y_1,\ y_2\} \subseteq C_2, \quad
                \{l_j,\ y_1,\ y_2\} \subseteq C_3 \, .\]

                Assume that $\overline{l_j} \not \in C_2$ and $\overline{l_i} \not \in C_1$. Assign $l_i \gets 1$, then assign $y_1 \gets 0$ and $y_2 \gets 0$. Under this assignment, $C_3 = \{l_j\}$ (recall that $C_3$ cannot contain other deterministic variables, see Case~\ref{case:three}). This would mean that $l_j=1$ in~every satisfying assignment of the resulting CNF formula which cannot be~the case for a~CNF encoding of~parity.
                Thus, we~may assume that either $\overline{l_j} \in C_2$ or $\overline{l_i} \in C_1$. Without loss of~generality, assume that
                $\overline{l_j} \in C_2$.

                Let~us show that for every $(x, y) \in \{0, 1\}^{n + s}$, such that $F(x, y) = 1$ and $l_i = 1$, it~holds that $l_j \neq y_1$ and $l_j \neq y_2$. Indeed, if there is $(x, y) \in \{0, 1\}^{n + s}$ such that $F(x,y)=1$ and $l_i = l_j = 1$, then $y_1$ and $y_2$ must be equal to~$0$. If there is $(x, y) \in \{0, 1\}^{n + s}$, such that $F(x, y) = 1, l_i = 1, l_j = 0$, then $y_1$ and $y_2$ must be equal to~$0$, otherwise $F$~does not contain a~critical clause w.r.t. $(x, y, i)$. Thus, assigning $l_i \gets 1$ eliminates two clauses ($C_1$~and~$C_2$). We~then replace~$y_1$ and~$y_2$ with $\overline{l_j}$ and delete the clause~$C_3$.

                \item $|Y| \ge 3, \{y_1,y_2,y_3\} \subseteq Y$: \[\{l_i,\ l_j\} \subseteq C_1,\quad  \{l_i,\ y_1,\ y_2\} \subseteq C_2,\quad  \{l_j,\ y_1,\ y_3\} \subseteq C_3 \, .\]

                Assigning $l_i \gets 1, l_j \gets 1$ eliminates $C_1, C_2, C_3$. Assigning $y_1 \gets 0$ eliminates at least one more clause
                ($y_1$ appears positively at least two times, but it~may appear in~$C_1$).
                There must be a~clause with $\overline{y_2}$ (otherwise we could assign $y_2 \gets 1$). Assigning $y_2 \gets 0$ eliminates at~least one more clause. Similarly, assigning $y_3 \gets 1$ eliminates another clause. In total, we eliminate at~least six clauses.
            \end{enumerate}
        \end{enumerate}
       \end{enumerate}
\end{proof}

\section*{Acknowledgments}
Research is partially supported by Huawei (grant TC20211214628).

\bibliography{references}

\end{document}